\documentclass[letterpaper, 10 pt, conference]{ieeeconf}  
\IEEEoverridecommandlockouts                              
\usepackage{mathtools}
\usepackage{bm}
\usepackage{bbm}

\usepackage{graphicx}
\usepackage{amssymb}
\usepackage{color}

\newcommand{\R}{\mathbb{R}} 

\newcommand{\CA}{\mathcal{A}} %
\newcommand{\CC}{\mathcal{C}} %
\newcommand{\CE}{\mathcal{E}} %

\newcommand{\col}{{\rm col}}

\usepackage{amsthm}

\theoremstyle{definition}
\newtheorem{definition}{Definition}
\newtheorem{remark}{Remark}

\newtheorem{lemma}{Lemma}

\theoremstyle{plain}
\newtheorem{theorem}{Theorem}

\newcommand*{\QE}{\hfill\ensuremath{\square}}%
\newenvironment{pfof}[1]{\vspace{1ex}\noindent{\textit{Proof of
		#1:}}\hspace{0.5em}} {\hfill\QE\vspace{1ex}}

\usepackage{xcolor}
\definecolor{myblue}{HTML}{e0f3f8}
\definecolor{mygreen}{HTML}{abdda4}
\definecolor{mylightpurple}{HTML}{fde0ef}
\definecolor{darkgreen}{HTML}{4d9221}

\usepackage{cite}
\usepackage{amsmath}
\usepackage{tikz}
\usepackage{mathtools}

\usepackage[colorlinks]{hyperref}
\hypersetup{citecolor=[RGB]{5,112,176},linkcolor=[RGB]{5,112,176},urlcolor=[RGB]{5,112,176}}

\title{\LARGE \bf Control of Seizure Dynamics in a Bistable System with Pulsatile Inputs}

\author{Xianlong Zhao and Yuzhen Qin
\thanks{This work was supported in part by the project Dutch Brain Interface Initiative (DBI$^2$) with project number 024.005.022 of the research programme Gravitation which is (partly) financed by the Dutch Research Council (NWO).
X. Zhao (remote research intern) and Y. Qin is with the Department of Machine Learning and Neural Computing, Donders Institute for Brain, Cognition and Behaviour, Radboud University, Nijmegen, the Netherlands. (Contact: yuzhen.qin@donders.ru.nl).  
}
}

\begin{document}

\maketitle
\thispagestyle{empty}
\pagestyle{empty}

\begin{abstract}
    Bistable neural population models describe transitions associated with epilepsy through the coexistence of a stable equilibrium representing normal neural activity and a stable limit cycle representing seizure-like oscillations. Perturbations that drive the system outside the equilibrium’s basin of attraction can trigger transitions to seizure-like activity. Enlarging this basin enables the system to recover normal activity from a larger set of perturbed states, providing a model-based mechanism for reducing frequency of seizure initiation. Motivated by electrical brain stimulation, this paper investigates basin enlargement using pulsatile feedback control. Combining higher-order averaging with periodic near-identity transformations, we establish conditions under which nonzero-mean and zero-mean pulsatile feedback gains enlarge the basin of attraction and derive guidelines for pulse design. Numerical simulations validate the theoretical enlargement.
\end{abstract}


\section{Introduction}
Brain stimulation offers a treatment option for patients with drug-resistant epilepsy who are unsuitable for resective surgery.  Although stimulation can reduce seizure frequency, responses vary and many patients continue to experience seizures~\cite{nair2020nine}. Stimulation parameters are often selected empirically, and therapeutic mechanisms remain incompletely understood. These limitations motivate theoretical approaches that connect stimulation waveforms to neural dynamics and guide predictable control design. 

We investigate whether stimulation-like inputs can enlarge the healthy state's basin of attraction in a bistable neural population model. Its stable equilibrium and stable limit cycle represent normal activity and seizure-like oscillations, respectively~\cite{2010_KSN_VDN_epilepsy_bistable}. Perturbations can initiate seizure-like activity by driving the state outside the equilibrium's basin. Local stability alone therefore does not quantify resilience to such transitions. Enlarging the basin enables recovery from a larger set of perturbed states, including states that would otherwise approach the seizure-like attractor, providing a model-based objective for seizure frequency reduction.

\textbf{Related work.}
Fractional-order approaches support model-based seizure regulation through network stabilization~\cite{wang2026stabilizing,2023_REA_RG_BP_PS_ACC} and predictive neurostimulation~\cite{chatterjee2020fractional}. Passivity-based feedback has also been used to establish local stabilization and estimate the basin of attraction of the Epileptor neural mass model~\cite{acharya2026passivity}.
Seizure regulation is closely related to controlling pathological oscillations~\cite{2022_WD_MJ}, including optimal inputs for terminating seizure-like bursting~\cite{wilson2014hamilton} and network interventions for containing their spread~\cite{allibhoy2022optimal}.
Bistable models capture the coexistence of healthy and seizure-like attractors~\cite{2010_KSN_VDN_epilepsy_bistable}. Our earlier work analytically characterizes the healthy-state basin of attraction and relates its size to seizure susceptibility~\cite{self_qin2024analytical}. Pinning-like feedback has further been developed to stabilize the non-ictal equilibrium in networks of bistable oscillators~\cite{liu2025synchronization}.

\begin{figure}[t]
\centering
\includegraphics[width=0.32\textwidth]{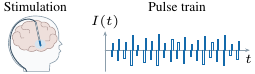}
\vspace{-8pt}
\caption{Pulsatile inputs in electrical brain stimulation }
\label{Pulsatile}
\vspace{-15pt}
\end{figure}

Practical brain-stimulation systems (see Fig.~\ref{Pulsatile}) deliver trains of electrical pulses~\cite{krauss2021technology}, commonly using charge-balanced waveforms to avoid net direct current and limit irreversible electrochemical reactions at the electrode-tissue interface and brain tissue damage~\cite{merrill2005electrical}. These waveform properties motivate studying how pulsatile control affects seizure susceptibility. Previous work studies stabilization of oscillator networks by high-frequency, zero-mean control~\cite{2023_own_OJCS}, but limited to local stability analysis. We investigate how pulsatile feedback reshapes the basin boundary.

\textbf{Contribution.}
This paper investigates enlargement of the healthy equilibrium's basin of attraction using pulsatile feedback that has non-zero and zero means. Unlike most existing studies that rely on local stability analysis, we combine first- and higher-order averaging with periodic near-identity transformations to approximate the basins of attraction of the controlled systems. We analytically show that the basins of attraction under control strictly contain the uncontrolled one, indicating that enlargement is established. The obtained results also provide a way to design pulse parameters to maximize the  enlargement under practical constraints. Numerical simulations confirm the theoretical enlargement.

\section{Problem Formulation}
We employ the following system to describe the activity of  neuronal populations:
\begin{subequations}\label{main}
    \begin{align}
    \dot x &= - \omega y + x f(x, y) \,, \\
    \dot y &= \hphantom{-}\omega x + y f(x, y) \,,
\end{align}
\end{subequations}
where 
\begin{equation}\label{func_f}
    f(x,y) \coloneq \sigma + 2 a b (x^2+y^2)-b(x^2+y^2)^2,
\end{equation}
and $x,y\in \R$ describe the population-averaged firing rate of pyramidal and interneuron populations, respectively \cite{2014_KS_KM_PG_et_al,2010_KSN_VDN_epilepsy_bistable}. The parameters satisfy
$\omega, a, b>0$ and $\sigma\in \R$. 

This system exhibits various dynamical behaviors with different parameter configurations \cite{2018_Srogatz_nonlinear_dyna,2012_BO_FTHB_Math_epilepsy}.  In particular, it is  \textit{bistable} if $-a^2b < \sigma < 0$, where a locally stable  equilibrium $\CE\coloneq\{\col(x,y)=0\}$ and a locally stable limit cycle 
$
	\CC \coloneq\left\{x,y\in\R\colon x^2 +y^2 = {a+ \sqrt{a^2 + \sigma/b}}\right\}
$
coexist, with basins of attraction separated by the unstable limit cycle:
\begin{equation*}
	\CC' \coloneq\left\{x,y\in\R\colon x^2 +y^2 = {a- \sqrt{a^2 + \sigma/b}}\right\}.
\end{equation*}
In other words, the system converges to $\CE$ if it starts inside $\CC'$ and converges to $\CC$ if it starts outside of $\CC'$. 

In this bistable regime, the system captures the key dynamical features associated with epileptic seizures. The stable equilibrium represents normal baseline neural activity, while the limit cycle corresponds to seizure-like oscillations. Seizure onset and termination are modeled as transitions between these two states, driven by either intrinsic fluctuations or external perturbations, consistent with the spontaneous  nature of epileptic activity. 
As demonstrated in~\cite{self_qin2024analytical}, the basin of attraction defined by $\CC'$ is crucial for distinguishing healthy from epileptic brain dynamics. A smaller basin of attraction implies greater sensitivity to perturbations and, consequently, a higher susceptibility to seizure initiation, a characteristic commonly associated with epilepsy.

Throughout this paper, we assume that the model parameters satisfy the conditions required for bistability, and the term basin of attraction refers to the basin of attraction of the equilibrium $\CE$ defined by
\begin{equation*}
	\CA^{\rm unctr} \coloneq\left\{x,y\in\R\colon x^2 +y^2 <\rho_{\mathrm{unctr}}:= {a- \sqrt{a^2 + \sigma/b}} \right\},
\end{equation*}

\textbf{Objective:} We aim to design feedback control to enlarge the basin of attraction, thereby increasing the system’s resilience to fluctuations and perturbations and reducing its susceptibility to seizure onset.

Particularly, we consider the controlled system described
\begin{subequations}\label{control}
	\begin{align}
		\dot x &= - \omega y + x f(x, y) +u_x(t) \,, \\
		\dot y &= \hphantom{-}\omega x + y f(x, y) +u_y(t)\,,
	\end{align}
\end{subequations}
where the feedback control inputs are linear, given by 
\begin{subequations}
    \begin{align}
        &u_x(t) = -K(t)x\\
        &u_y(t) = -K(t)y.
    \end{align}
\end{subequations} 
\begin{definition}[Basins of attraction]
For a prescribed feedback gain $K(t)$, let $z(t;t_0,z_0)$ denote the solution of \eqref{control} initialized at $z(t_0)=z_0$, where $z=\operatorname{col}(x,y)$. The basin of attraction of the origin at initial time $t_0$ is defined as
\begin{equation*}
\mathcal A_K(t_0)
:=
\left\{
z_0\in\mathbb R^2:
\lim_{t\to\infty}\|z(t;t_0,z_0)\|=0
\right\}.
\end{equation*}
Note that the considered feedback control can make the system time varying, thus the basin of attraction may depend on the initial time $t_0$.
\end{definition}

For the simple linear feedback where  $K(t)=k>0$, one can derive that the basin of attraction is enlarged to  
\begin{equation*}
	\CA^{\rm lin} \coloneq\left\{x,y\in\R\colon x^2 +y^2 < a-\sqrt{a^2+(\sigma-k)/{b}} \right\}.
\end{equation*}
However, this type of continuous control is generally undesirable in electrical neurostimulation due to safety constraints. This motivates us to consider control inputs that have stimulation waveforms in this paper.

To facilitate the analysis of the basin of attraction, we rewrite the controlled system~\eqref{control} in polar coordinates. By introducing \(x+iy=\varrho e^{i\theta}\) and defining the radial variable \(\rho=\varrho^2=x^2+y^2\), the system becomes
    \begin{align}
        \dot{\rho} &= 2\rho\bigl(h(\rho)-K(t)\bigr),\label{radial dynamics}\\
        \dot{\theta} &= \omega,
    \end{align}
where \(h(\rho)=\sigma+2ab\rho-b\rho^2\). Since the radial dynamics are decoupled from the angular dynamics, the evolution of \(\rho\) can be studied independently. Therefore, we focus on the radial dynamics in \eqref{radial dynamics}
to characterize how control inputs of stimulation form shape the basin of attraction of the system.

\section{Pulsatile feedback with non-zero mean}

Consider the pulse-like feedback gain
\begin{equation}\label{negative-mean-pulse}
K(t)=
\begin{cases}
k, & ((t-t_0)\bmod T)\in[0,dT),\\
0, & \text{otherwise},
\end{cases}
\end{equation}
where $t\ge t_0$, and $k>0$, $T>0$, and $d\in(0,1)$ denote the pulse amplitude, stimulation period, and duty cycle, respectively. Stimulation is initiated at the initial time $t_0$.  The theorem below provides a way to design the parameters $k, T$, and $d$ to enlarge the basin of attraction.

\begin{theorem}[Estimated basin under pulsatile control]\label{thm:pulse-convergence}
Suppose that the pulse-like feedback~\eqref{negative-mean-pulse} satisfies
\begin{equation}
    kd<a^2b+\sigma,
\end{equation}
Then, for any fixed $\kappa_0\in(0,kd(1-d)\rho_-^{\mathrm{avg}})$, there exists $T_0>0$ such that, for any $T \in (0,T_0 ]$,  every solution to the system~\eqref{control} initialized at time $t_0$ in the set 
\begin{equation*}
    \mathcal A_p
:=\left\{x,y\in\mathbb R:
x^2+y^2<\rho_-(\kappa_0,T)\right\}
\end{equation*}
remains in $\mathcal A_p$ for all $t\geq t_0$ and converges asymptotically to $\CE$. In particular, $\mathcal A_p\subseteq\mathcal A_K(t_0)$. Here, 
$\rho_-(\kappa_0,T) =\rho_-^{\mathrm{avg}} +(kd(1-d)\rho_-^{\mathrm{avg}}-\kappa_0)T$ with 
\begin{equation}\label{pulse-averaged-boundary}
\rho_-^{\mathrm{avg}}
=a-\sqrt{a^2+\frac{\sigma-kd}{b}}.
\end{equation}
\end{theorem}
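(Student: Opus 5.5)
The approach is to work with the scalar radial equation \eqref{radial dynamics} and compare the exact period map with the averaged dynamics. Averaging $K(t)$ over one period gives $\bar K=kd$, so the averaged equation is
\begin{equation*}
\dot{\bar\rho}=2\bar\rho\bigl(h(\bar\rho)-kd\bigr).
\end{equation*}
Under $kd<a^2b+\sigma$, the nonzero equilibria $a\pm\sqrt{a^2+(\sigma-kd)/b}$ are real; we also need $\sigma-kd<0$, which holds since $\sigma<0$. The smaller one, $\rho_-^{\mathrm{avg}}$ in \eqref{pulse-averaged-boundary}, is the unstable threshold of the averaged system. Since the problem is one-dimensional, the cleanest route is the Poincaré (stroboscopic) map $P_T$ over one period starting at the phase $t_0$. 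Because $\rho=0$ is invariant and solutions cannot cross, $P_T$ is increasing and satisfies $P_T(0)=0$. Hence invariance and convergence for $\rho_0\in[0,\rho_-(\kappa_0,T))$ follow once we show
\begin{equation*}
P_T(\rho)<\rho \quad\text{for all }\rho\in(0,\rho_-(\kappa_0,T)].
\end{equation*}
We must also check that the intra-period excursion stays below the threshold, so that $\rho(t)<\rho_-$ for all $t$ and not only at multiples of $T$.

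The first step is to compute $P_T$ to second order in $T$. Write $g(\rho)=2\rho h(\rho)$ for the free vector field; the on-phase field is $g(\rho)-2k\rho$. Composing the on-flow for time $dT$ with the off-flow for time $(1-d)T$ and Taylor expanding (equivalently, a second-order near-identity transformation $\rho=\xi+T w(t/T,\xi)$ with $w$ periodic) gives
\begin{equation*}
P_T(\rho)=\rho+T\,\bar g(\rho)+T^2 c(\rho)+O(T^3),\qquad \bar g(\rho)=2\rho\bigl(h(\rho)-kd\bigr).
\end{equation*}
The $T^2$ coefficient should come out as $c(\rho)=\tfrac12\bar g\bar g'(\rho)+ kd(1-d)\,\rho\, g'(\rho)$ up to sign conventions; it arises from the Lie bracket of the on- and off-vector fields, and $[\,2k\rho\partial_\rho,\, g\partial_\rho]$ is proportional to $\rho g'-g$. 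This is where the $kd(1-d)$ factor in the statement originates. Equivalently, the fixed point of $P_T$ near $\rho_-^{\mathrm{avg}}$ is
\begin{equation*}
\rho_-^{\mathrm{avg}}+\delta T+O(T^2),
\end{equation*}
with $\delta$ obtained by solving $\bar g'(\rho_-^{\mathrm{avg}})\,\delta+c(\rho_-^{\mathrm{avg}})=0$ together with the $\tfrac12\bar g\bar g'$ contribution. One has to verify that this yields $\delta=kd(1-d)\rho_-^{\mathrm{avg}}$. Here $\bar g'(\rho_-^{\mathrm{avg}})>0$, since the threshold is unstable.

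The second step uses the margin $\kappa_0>0$ to absorb the remainder. At $\rho=\rho_-(\kappa_0,T)$, the map satisfies
\begin{equation*}
P_T(\rho)-\rho=-\bar g'(\rho_-^{\mathrm{avg}})\,\kappa_0T^2+O(T^3)<0
\end{equation*}
for $T$ small. Indeed $P_T(\rho)-\rho$ is positive just above its fixed point, so we need $\rho_-(\kappa_0,T)$ to lie strictly below that fixed point. Sorting out these signs carefully is necessary and could flip the inequality; if it does, the fix is to orient $\kappa_0$ accordingly.

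On the interval $(0,\rho_-(\kappa_0,T)]$ we must rule out other fixed points of $P_T$. Near $0$, we have $P_T(\rho)=\rho\,e^{2(\sigma-kd)T+O(\rho T)}<\rho$. On compact subintervals bounded away from $0$ and from $\rho_-^{\mathrm{avg}}$, the term $T\bar g<0$ dominates $O(T^2)$. Near $\rho_-^{\mathrm{avg}}$, monotonicity of $(P_T(\rho)-\rho)/T$ follows from $\bar g'>0$ plus a uniform $C^1$ remainder bound. These three regions give a uniform $T_0$.

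Finally, the within-period bound is $|\rho(t)-\rho(t_0+nT)|=O(T)$, but this alone is not enough to keep $\rho(t)<\rho_-$. Instead we use strict monotone decrease of the sampled sequence together with the observation that during an on-phase $\rho$ is below the off-phase trajectory. Alternatively, note that the claim only needs $\rho_0<\rho_-$: compare with the fixed point $\rho^*_T>\rho_-$ of $P_T$, whose periodic orbit separates solutions by uniqueness. Since the orbit starting below $\rho^*_T$ at $t_0$ stays below that periodic orbit, it suffices that the periodic orbit's minimum exceeds $\rho_-(\kappa_0,T)$; we expect this may need a slightly smaller $T_0$.

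The main obstacle is the second-order computation: obtaining the exact $T$-correction $kd(1-d)\rho_-^{\mathrm{avg}}$ with the correct sign, and making the $O(T^3)$ remainder uniform in $C^1$ near the threshold.
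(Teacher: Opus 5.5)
\textbf{Gap: invariance of $\mathcal A_p$ between sampling times.} Your period-map argument for convergence works in outline. The claim that trajectories remain in $\mathcal A_p$ for all $t\ge t_0$ is not established, and your alternative argument for it fails.

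Comparison with the periodic orbit cannot give it. From $\rho(t)<\rho_*(t)$ you can conclude at best $\rho(t)<\rho_*(t_0)$, and $\rho_*(t_0)>\rho_-(\kappa_0,T)$ by construction. A lower bound on $\min\rho_*$ is therefore irrelevant. The condition you ask for is also false for every admissible $\kappa_0$. The orbit is $\rho_-^{\mathrm{avg}}(1+Tw(s))+\mathcal O(T^2)$ with $w$ ranging over $[-kd(1-d),kd(1-d)]$, so $\min\rho_*\approx\rho_-^{\mathrm{avg}}-kd(1-d)\rho_-^{\mathrm{avg}}T<\rho_-(\kappa_0,T)$. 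Shrinking $T_0$ does not help.

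Your first suggestion is too vague to check. Comparing with the uncontrolled flow does not bound $\rho$ by $\rho(t_0+nT)$, because that flow increases above $\rho_{\mathrm{unctr}}$.

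The missing observation is that $\rho$ itself is monotone on each phase:
\begin{itemize}
\item \emph{On-phase.} Since $h$ is increasing on $(0,a)$ and $h(\rho_-^{\mathrm{avg}})=kd$, every $\rho\in(0,\rho_-(\kappa_0,T)]$ satisfies $h(\rho)-k\le -k(1-d)+\mathcal O(T)<0$. So $\rho$ strictly decreases during each on-phase.
\item \emph{Off-phase.} The dynamics are scalar and autonomous, so $\rho$ is monotone and attains its maximum at an endpoint.
\item \emph{Endpoints.} Both are at most $\rho(t_0+nT)$: one by the on-phase decrease, the other by $P_T(\rho_n)<\rho_n$.
\end{itemize}
Induction then gives $\rho(t)\le\rho(t_0)<\rho_-(\kappa_0,T)$ for all $t\ge t_0$. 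It also gives $\rho(t)\le\rho(t_0+nT)$ on the $n$th period, which supplies the continuous-time convergence you left unaddressed. The paper reaches the same conclusion through $\rho(t)\le\rho(t_0)\rho_*(t)/\rho_*(t_0)$ together with Lemma~\ref{lem:pulse-periodic-max}, which shows $\rho_*$ peaks at $t_0$ by this same on/off monotonicity applied to $\rho_*$.

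\textbf{The second-order coefficient.} Your displayed $c(\rho)=\tfrac12\bar g\bar g'+kd(1-d)\rho g'$ is wrong even up to sign; it would make $\delta$ negative. Composing the on- and off-flows gives $c=\tfrac12\bar g\bar g'+kd(1-d)(g-\rho g')$, which matches your bracket remark. At $\rho_-^{\mathrm{avg}}$ you have $\bar g=0$, $g-\rho g'=-2\rho^2h'$ and $\bar g'=2\rho h'>0$. Hence $\delta=kd(1-d)\rho_-^{\mathrm{avg}}$, as required. Your step-two sign $-\bar g'(\rho_-^{\mathrm{avg}})\kappa_0T^2<0$ is already correct, so $\kappa_0$ needs no reorientation.

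\textbf{Comparison of routes.} With these repairs your route is a valid alternative to the paper's. $P_T$ is a composition of two autonomous flows, hence smooth in $(\rho,T)$, and $(P_T(\rho)-\rho)/T$ extends to $\bar g$ at $T=0$. The implicit function theorem then gives the fixed point and its $\mathcal O(T)$ expansion directly, replacing both the appeal to Novaes' theorem and the near-identity computation. The cost is your three-region exclusion of other fixed points, which needs uniform $C^1$ remainder bounds. The paper avoids this with $\phi=\log(\rho/\rho_*)$. Its derivative $2(h(\rho)-h(\rho_*))$ does not involve $K(t)$ and is negative whenever $\rho<\rho_*<a$. This yields $P_T(\rho)<\rho$ on all of $(0,\rho_*(t_0))$ at once.
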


Since $0<\kappa_0<kd(1-d)\rho_-^{\mathrm{avg}}$, consequently, for sufficiently small $T\in (0,T_0]$,
$$
\rho_-(\kappa_0,T)
=\rho_-^{\mathrm{avg}}
+\bigl(kd(1-d)\rho_-^{\mathrm{avg}}-\kappa_0\bigr)T
>\rho_-^{\mathrm{avg}}
>\rho_{\mathrm{unctr}}.
$$
Thus, the estimated basin $\mathcal A_p$ strictly contains the uncontrolled basin of attraction, i.e., $\CA^{\mathrm{unctr}}\subsetneq\mathcal A_p\subseteq\mathcal A_K(t_0)$, showing that the basin of attraction is enlarged.

\begin{remark}
    The set $\mathcal A_p$ is constructed from an asymptotic approximation of the basin boundary, characterized below through an unstable periodic solution of the radial dynamics. The analysis uses the unstable equilibrium of the averaged system to approximate this boundary. Thus, beyond establishing local stability of the origin, the theorem provides an explicit estimate of its basin of attraction, i.e., every trajectory initialized in $\mathcal A_p$  at $t_0$ remains in $\mathcal A_p$ and converges asymptotically to the origin.
\end{remark}

\subsection{Proof of Theorem~\ref{thm:pulse-convergence}}

First, we apply averaging techniques to the time-varying system~\eqref{radial dynamics} under the control input~\eqref{negative-mean-pulse}.
Let $s=(t-t_0)/T$ and $V(s):=K(Ts+t_0)$. Since $\int_0^1V(s)\,ds=kd$, then first-order averaging yields averaged radial system
\begin{equation}\label{avg pulse-like}
\frac{d\bar\rho}{ds}
=2T\bar\rho\bigl(h(\bar\rho)-kd\bigr).
\end{equation}

The following lemma determines when this averaged system remains bistable and characterizes the boundary of the basin of attraction of its origin.

\begin{lemma}[Bistability of averaged system]\label{lem:bistability-pulse}
Suppose that $-a^2b<\sigma<0$, $k>0$, and $d\in(0,1)$. The averaged system~\eqref{avg pulse-like} is bistable if and only if
\begin{equation}\label{pulse-bistability-condition}
kd<a^2b+\sigma.
\end{equation}
Under this condition,  the basin of attraction of $\bar\rho=0$ for~\eqref{avg pulse-like} is $[0,\rho_-^{\mathrm{avg}})$, where
\begin{equation}\label{pulse-averaged-boundary}
\rho_-^{\mathrm{avg}}
=a-\sqrt{a^2+\frac{\sigma-kd}{b}}.
\end{equation}
We call $\rho_-^{\mathrm{avg}}$ the \emph{averaged basin boundary}.
\end{lemma}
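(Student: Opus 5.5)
The averaged system~\eqref{avg pulse-like} is a scalar autonomous ODE on $\bar\rho\ge 0$, so I will prove the lemma by a one-dimensional phase-line analysis of its vector field. Since $T>0$ only rescales time, it plays no role in equilibria or stability, and I can study
\[
g(\bar\rho):=2\bar\rho\bigl(h(\bar\rho)-kd\bigr)=2\bar\rho\bigl(\sigma-kd+2ab\bar\rho-b\bar\rho^2\bigr).
\]
The equilibria are $\bar\rho=0$ together with the nonnegative roots of the quadratic $q(\bar\rho):=-b\bar\rho^2+2ab\bar\rho+(\sigma-kd)$. These roots are
\[
\rho_\pm=a\pm\sqrt{a^2+(\sigma-kd)/b}.
\]
By ``bistable'' I mean the same configuration as the uncontrolled model: $0$ is stable, there is a stable nonzero equilibrium $\rho_+$ (the averaged limit cycle), and an unstable equilibrium $\rho_-\in(0,\rho_+)$ separates them.

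\emph{Sufficiency.} Since $\sigma<0$ and $kd>0$, we have $q(0)=\sigma-kd<0$, so $g'(0)=2(\sigma-kd)<0$ and the origin is always locally exponentially stable. Under $kd<a^2b+\sigma$ the discriminant $a^2+(\sigma-kd)/b$ is positive, so the roots are real and distinct. Because $(\sigma-kd)/b<0$, the square root is strictly less than $a$, which gives $0<\rho_-<\rho_+$. The quadratic $q$ opens downward, so $q<0$ on $(0,\rho_-)$, $q>0$ on $(\rho_-,\rho_+)$, and $q<0$ on $(\rho_+,\infty)$. The sign pattern of $g$ therefore makes $\rho_-$ unstable and $\rho_+$ stable, which is bistability.

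\emph{Necessity.} If $kd\ge a^2b+\sigma$, then either the discriminant is negative, so there are no positive roots and $q<0$ on $(0,\infty)$; or it is zero, so there is a single double root $\rho=a$ that is semistable. In either case there is no stable positive equilibrium, so the system is not bistable. This also shows why the strict inequality is needed.

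\emph{Basin.} For $\bar\rho_0\in(0,\rho_-)$, $g<0$ on $(0,\rho_-)$ and $\rho_-$ is an equilibrium, so uniqueness of solutions (the field is polynomial) keeps the trajectory in $(0,\rho_-)$. It is then strictly decreasing and bounded below, so it converges to an equilibrium in $[0,\rho_-)$; the only such equilibrium is $0$. Conversely, initial conditions $\bar\rho_0=\rho_-$ stay at $\rho_-$, and $\bar\rho_0>\rho_-$ converge to $\rho_+$ by the same monotonicity argument. Together with the invariance of $0$, this gives the basin $[0,\rho_-^{\mathrm{avg}})$. The only delicate point is fixing the definition of bistability so that the ``only if'' direction is meaningful, in particular excluding the degenerate double-root case. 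The remaining steps are routine sign bookkeeping.
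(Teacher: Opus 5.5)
Your proposal is correct and follows essentially the same phase-line argument as the paper. You locate the roots $\rho_\pm^{\mathrm{avg}}$ of $h(\bar\rho)-kd$, read off the sign pattern of the downward-opening quadratic, and use monotonicity of scalar solutions to identify the basin $[0,\rho_-^{\mathrm{avg}})$. Your explicit treatment of the degenerate case $kd=a^2b+\sigma$, where $a$ is a semistable double root, is a slightly more careful version of the paper's remark that $h(\bar\rho)-kd\le 0$ on $[0,\infty)$ when the condition fails.
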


\begin{proof}
Note that $h(\bar\rho)-kd=\sigma+a^2b-kd-b(\bar\rho-a)^2$. If~\eqref{pulse-bistability-condition} holds, its zeros $\rho_\pm^{\mathrm{avg}}=a\pm\sqrt{a^2+(\sigma-kd)/b}$ are real, distinct, and positive since $\sigma-kd<0$. Moreover, $h(\bar\rho)-kd<0$ on $(0,\rho_-^{\mathrm{avg}})\cup(\rho_+^{\mathrm{avg}},\infty)$ and $h(\bar\rho)-kd>0$ on $(\rho_-^{\mathrm{avg}},\rho_+^{\mathrm{avg}})$. Since solutions of the scalar system~\eqref{avg pulse-like} are monotone, those starting in $[0,\rho_-^{\mathrm{avg}})$ converge to $0$, while those starting above $\rho_-^{\mathrm{avg}}$ converge to $\rho_+^{\mathrm{avg}}$. Hence, the system is bistable and the basin of $\bar\rho=0$ is $[0,\rho_-^{\mathrm{avg}})$. If~\eqref{pulse-bistability-condition} fails, $h(\bar\rho)-kd\leq0$ for all $\bar\rho\geq0$, so there is no stable positive equilibrium and the system is not bistable.
\end{proof}

Under the bistability condition,
$\rho_-^{\mathrm{avg}}$ can be used to 
approximate an unstable $T$-periodic solution of the
original radial dynamics. In the proof of Theorem~\ref{thm:pulse-convergence},
we use this periodic solution to identify initial conditions
whose trajectories converge to the origin.

\begin{lemma}[Existence of periodic solution]\label{lem:pulse-periodic-existence}
Suppose that condition~\eqref{pulse-bistability-condition} holds. Then, there exists $T_e>0$, such that for all $T \in (0,T_e]$, the radial system~\eqref{radial dynamics} under the pulse-like feedback~\eqref{negative-mean-pulse} admits a $T$-periodic solution $\rho_-^{\rm per}(t;T)$ satisfying
\begin{equation*}
\rho_-^{\rm per}(t_0;T)\to\rho_-^{\rm avg}\text{ when }T\to0.
\end{equation*}
\end{lemma}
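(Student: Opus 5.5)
We look for the periodic solution as a fixed point of the period map of the scalar radial equation \eqref{radial dynamics}. In the rescaled time $s=(t-t_0)/T$ the dynamics read $d\rho/ds=2T\rho\,(h(\rho)-V(s))$, where $V$ is $1$-periodic and piecewise constant. The right-hand side is smooth in $\rho$ and piecewise continuous in $s$, so solutions exist, are unique, and depend smoothly on $(\rho_0,T)$ on a compact neighbourhood of $\rho_-^{\rm avg}$ for $T$ small. Let $P(\rho_0,T)$ denote the solution at $s=1$ starting from $\rho_0$ at $s=0$. A $T$-periodic solution initialized at $t_0$ corresponds exactly to a root of $P(\rho_0,T)=\rho_0$. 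Because $V$ has period $1$ in $s$, periodicity in $s$ with period $1$ is the same as periodicity in $t$ with period $T$.

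Since $P(\rho_0,0)=\rho_0$ identically, we divide out the trivial part and define
\[
G(\rho_0,T):=\frac{P(\rho_0,T)-\rho_0}{T}\quad (T\neq 0).
\]
We then show that $G$ extends to $T=0$ with
\[
G(\rho_0,0)=\int_0^1 2\rho_0\bigl(h(\rho_0)-V(s)\bigr)ds=2\rho_0\bigl(h(\rho_0)-kd\bigr),
\]
which is $2T$-scaled out of the averaged vector field \eqref{avg pulse-like}. This follows by writing
\[
P(\rho_0,T)-\rho_0=T\int_0^1 2\rho(s)\bigl(h(\rho(s))-V(s)\bigr)ds
\]
and using $\rho(s)=\rho_0+O(T)$ uniformly on $s\in[0,1]$, which holds by Gronwall. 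The same representation, differentiated in $\rho_0$ via the variational equation, shows that $G$ is $C^1$ jointly in $(\rho_0,T)$ near $(\rho_-^{\rm avg},0)$. The integrand is smooth in $\rho$, and the discontinuity in $s$ does not obstruct integration.

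Next we apply the implicit function theorem at $(\rho_-^{\rm avg},0)$. By Lemma~\ref{lem:bistability-pulse}, $G(\rho_-^{\rm avg},0)=0$. Moreover,
\[
\partial_{\rho_0}G(\rho_-^{\rm avg},0)=2\rho_-^{\rm avg}h'(\rho_-^{\rm avg})=4b\rho_-^{\rm avg}\sqrt{a^2+(\sigma-kd)/b}>0,
\]
which is nonzero because the roots $\rho_\pm^{\rm avg}$ are simple under \eqref{pulse-bistability-condition}. The implicit function theorem then gives $T_e>0$ and a continuous branch $\rho_0^*(T)$ with $G(\rho_0^*(T),T)=0$ for $T\in(0,T_e]$ and $\rho_0^*(T)\to\rho_-^{\rm avg}$ as $T\to0$. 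Setting $\rho_-^{\rm per}(t;T)$ to be the solution through $\rho_0^*(T)$ at $t_0$ yields the claim. Positivity of the derivative also indicates that this orbit is unstable, since $\partial_{\rho_0}P=1+T\,\partial_{\rho_0}G>1$, though that is not needed here.

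The main obstacle is the regularity of $G$ at $T=0$: showing that the $T^{-1}$ rescaling gives a $C^1$ function, not merely a continuous one, so that the implicit function theorem applies. We handle this by proving that the quantity $\partial_{\rho_0}\rho(s)-1$ is $O(T)$ uniformly, using the variational equation $d(\partial_{\rho_0}\rho)/ds=2T\,\partial_\rho[\rho(h-V)]\,\partial_{\rho_0}\rho$. If joint $C^1$ regularity in $T$ proves delicate, a fallback avoids it. By continuity of $G$ and the sign change of $G(\cdot,0)$ across $\rho_-^{\rm avg}$, we get $G(\rho_-^{\rm avg}-\varepsilon,T)<0<G(\rho_-^{\rm avg}+\varepsilon,T)$ for small $T$. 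The intermediate value theorem then gives a root in $(\rho_-^{\rm avg}-\varepsilon,\rho_-^{\rm avg}+\varepsilon)$, and letting $\varepsilon\to0$ gives the limit.
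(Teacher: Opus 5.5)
Your proposal is correct, but it takes a different route from the paper. The paper's proof is a one-line appeal to an external averaging theorem for discontinuous systems (\cite{Novaes_2022}, Theorem A). It checks only that the first averaged function $g_1(\rho)=2\rho(h(\rho)-kd)$ has a simple zero at $\rho_-^{\rm avg}$, and lets the cited theorem produce the $T$-periodic orbit.

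You instead prove the needed special case directly. You form the period map $P(\rho_0,T)$, divide $P-\rho_0$ by $T$, identify $G(\cdot,0)$ with $g_1$, and conclude by the implicit function theorem, with the intermediate value theorem as a fallback. Your derivative $g_1'(\rho_-^{\rm avg})=4b\rho_-^{\rm avg}\sqrt{a^2+(\sigma-kd)/b}$ is right.

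The regularity issue you flag as the main obstacle is actually not an obstacle. $P(\rho_0,T)$ is just the time-$Td$ flow of $\dot\rho=2\rho(h(\rho)-k)$ followed by the time-$T(1-d)$ flow of $\dot\rho=2\rho h(\rho)$. Both flows are analytic in $(\rho_0,T)$ near $(\rho_-^{\rm avg},0)$. Hence $G(\rho_0,T)=\int_0^1\partial_TP(\rho_0,\theta T)\,d\theta$ is smooth by Hadamard's lemma, including at $T=0$, and no variational-equation estimates are needed.

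Your route is self-contained and shows concretely why the discontinuity of $K$ is harmless. It also gives more than the lemma states:
\begin{itemize}
\item local uniqueness of the branch near $\rho_-^{\rm avg}$, from the implicit function theorem;
\item instability of the orbit, since $\partial_{\rho_0}P=1+T\,\partial_{\rho_0}G>1$.
\end{itemize}
The paper uses both properties implicitly afterwards, when it speaks of \emph{the} unstable periodic solution and matches it with the near-identity expansion~\eqref{pulse-periodic-boundary}. Your intermediate-value fallback needs only continuity and a sign change of $g_1$, but by itself it loses the uniqueness.

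The paper's citation buys brevity and a single tool it can reuse for the zero-mean case. The cost is that the structural hypotheses of the external theorem are not spelled out. Your argument would transfer to the zero-mean $R$-dynamics essentially verbatim, because those are also piecewise analytic in $s$.
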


\begin{proof}
Let $g_1(\rho) := 2\rho(h(\rho)-kd)$, under the bistability condition~\eqref{pulse-bistability-condition}, the averaged system~\eqref{avg pulse-like} has its equilibrium satisfying $g_1(\rho_-^{\rm avg}) = 0$, $g_1'(\rho_-^{avg}) \neq0$, and averaged function $g_1(\rho)$ is not identically zero. By \cite[Theorem A]{Novaes_2022}, there exists a $T$-periodic solution $\rho^{\rm per}_-(t;T)$ of radial system~\eqref{radial dynamics}, such that the stated limit above holds.
\end{proof}

We next approximate $\rho_-^{\rm per}(t;T)$ using the periodic near-identity transformation~\cite[Sec. 2.8]{Sanders2007Averaging}:
\begin{equation}\label{pulse-near-identity}
\rho=\bar\rho+Tu_1(\bar\rho,s)+\mathcal O(T^2),
\end{equation}
associated with first-order averaging,
where $s=(t-t_0)/T$ and $u_1$ satisfies 
\begin{equation}\label{pulse-homological-equation}
\frac{\partial u_1}{\partial s}
=2\bar\rho\bigl(kd-V(s)\bigr),
\qquad
\int_0^1u_1(\bar\rho,s)\,ds=0.
\end{equation}
Solving~\eqref{pulse-homological-equation} and evaluating $u_1(\bar \rho,s)$
at $\bar\rho=\rho_-^{\mathrm{avg}}$ and $s=0$, we arrive at
\begin{equation*}
u_1\bigl(\rho_-^{\mathrm{avg}},0\bigr)
=kd(1-d)\rho_-^{\mathrm{avg}}.
\end{equation*}
Since $s=0$ corresponds to the initial time $t=t_0$,
the unstable periodic solution satisfies
\begin{equation}\label{pulse-periodic-boundary}
\rho_-^{\mathrm{per}}(t_0;T)
=\rho_-^{\mathrm{avg}}
+kd(1-d)\rho_-^{\mathrm{avg}}T
+\mathcal O(T^2).
\end{equation}

\begin{lemma}[Upper bound of periodic solution]\label{lem:pulse-periodic-max}
Let $\rho_-^{\rm per}(t;T)$ be the unstable $T$-periodic solution of
system~\eqref{radial dynamics} under the pulse input~\eqref{negative-mean-pulse}.
For sufficiently small $T$, it holds that
\begin{equation*}
\max_{t\in[t_0,t_0+T]}\rho_-^{\rm per}(t;T)
=
\rho_-^{\rm per}(t_0;T).
\end{equation*}
\end{lemma}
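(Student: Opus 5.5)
The approach is a monotonicity argument on the two phases of one period. On $[t_0,t_0+dT)$ the gain is $K=k$, so the periodic solution obeys $\dot\rho=2\rho(h(\rho)-k)$. On $[t_0+dT,t_0+T)$ the gain is $K=0$, so $\dot\rho=2\rho h(\rho)$. We show that $\rho_-^{\rm per}$ is strictly decreasing on the first phase and strictly increasing on the second. Periodicity, $\rho_-^{\rm per}(t_0+T;T)=\rho_-^{\rm per}(t_0;T)$, then places the maximum over $[t_0,t_0+T]$ at the endpoints $t_0$ and $t_0+T$, both of which carry the value $\rho_-^{\rm per}(t_0;T)$. This is the claim.

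\textbf{Step 1 (localization).} First we confine the orbit near $\rho_-^{\rm avg}$. By Lemma~\ref{lem:pulse-periodic-existence}, $\rho_-^{\rm per}(t_0;T)\to\rho_-^{\rm avg}$. On the compact set $\rho\in[0,2\rho_+^{\rm avg}]$ the vector field is bounded by some $M$ uniformly in the two phases. Hence, by Gr\"onwall or a direct integral bound, $|\rho_-^{\rm per}(t;T)-\rho_-^{\rm per}(t_0;T)|\le MT$ for all $t\in[t_0,t_0+T]$. Consequently, for every $\epsilon>0$ there is $T_1$ such that $|\rho_-^{\rm per}(t;T)-\rho_-^{\rm avg}|<\epsilon$ uniformly on the period whenever $T\le T_1$. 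This gives uniform closeness to $\rho_-^{\rm avg}>0$ and is the main technical point. It follows from the stated limit combined with the $\mathcal O(T)$ drift bound, and the expansion~\eqref{pulse-periodic-boundary} can also be used to quantify it.

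\textbf{Step 2 (signs of the two vector fields).} At $\rho=\rho_-^{\rm avg}$ we have $h(\rho_-^{\rm avg})=kd$. Because $0<kd<k$, this gives $h(\rho_-^{\rm avg})-k=kd-k=-k(1-d)<0$, while $h(\rho_-^{\rm avg})=kd>0$. We choose $\epsilon$ small enough that, by continuity of $h$, $h(\rho)-k<-k(1-d)/2$ and $h(\rho)>kd/2$ on $(\rho_-^{\rm avg}-\epsilon,\rho_-^{\rm avg}+\epsilon)$, with $\epsilon<\rho_-^{\rm avg}$ so that $\rho>0$ there. Step 1 keeps the periodic orbit in this interval, so $\dot\rho<0$ throughout the on-phase and $\dot\rho>0$ throughout the off-phase.

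\textbf{Step 3 (conclusion).} By Step 2, $\rho_-^{\rm per}$ decreases on $[t_0,t_0+dT]$ and increases on $[t_0+dT,t_0+T]$. Its maximum on $[t_0,t_0+T]$ is therefore $\max\{\rho_-^{\rm per}(t_0),\rho_-^{\rm per}(t_0+T)\}$, which equals $\rho_-^{\rm per}(t_0;T)$ by $T$-periodicity. This holds for all $T\le\min\{T_e,T_1\}$.
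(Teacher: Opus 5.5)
Your proof is correct and follows the same route as the paper. The paper also uses $h(\rho_-^{\rm avg})=kd$ to get $\dot\rho_*<0$ on the on-phase, where $h-k\approx-k(1-d)$, and $\dot\rho_*>0$ on the off-phase, where $h\approx kd$, and then concludes by $T$-periodicity; the one difference is that the paper simply asserts $\rho_*(t)=\rho_-^{\rm avg}+\mathcal O(T)$ uniformly from the near-identity expansion, whereas you derive the needed uniform closeness from the endpoint limit in Lemma~\ref{lem:pulse-periodic-existence} plus an $MT$ drift bound (which tacitly needs a short continuation argument to keep the orbit inside your compact set for small $T$).
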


\begin{proof}
Let $\rho_*(t):=\rho_-^{\rm per}(t;T)$. Since
$\rho_*(t)=\rho_-^{\rm avg}+\mathcal O(T)$ uniformly on $[t_0,t_0+T]$ and
$h(\rho_-^{\rm avg})=kd$, we have
$h(\rho_*(t))=kd+\mathcal O(T)$.
Hence, for sufficiently small $T$,
\begin{equation*}
\dot\rho_*(t)<0,\qquad t\in(t_0,t_0+dT),
\end{equation*}
because $K(t)=k$ and
$h(\rho_*(t))-k=-k(1-d)+\mathcal O(T)<0$, while
\begin{equation*}
\dot\rho_*(t)>0,\qquad t\in(t_0+dT,t_0+T),
\end{equation*}
because $K(t)=0$ and
$h(\rho_*(t))=kd+\mathcal O(T)>0$.

Thus $\rho_*(t)$ decreases on $(t_0,t_0+dT)$ and increases on $(t_0+dT,t_0+T)$. Since
$\rho_*(t_0+T)=\rho_*(t_0)$, it follows that
$\rho_*(t)\leq\rho_*(t_0)$ for all $t\in[t_0,t_0+T]$.
\end{proof}

\begin{pfof}{Theorem~\ref{thm:pulse-convergence}}
Consider the initial value problem
\begin{equation*}
\dot{\rho}=2\rho\bigl(h(\rho)-K(t)\bigr),
\qquad \rho(t_0)=\rho_0,
\end{equation*}
where $(x(t_0),y(t_0))\in\mathcal A_p$ so that initial value satisfies $\rho_0=x(t_0)^2+y(t_0)^2<\rho_-(\kappa_0,T)$.
Choose $0<\kappa_0<kd(1-d)\rho_-^{\rm avg}$.
By~\eqref{pulse-periodic-boundary},
$\rho_-^{\rm per}(t_0;T)-\rho_-(\kappa_0,T)
=\kappa_0T+\mathcal O(T^2)>0$
for sufficiently small $T>0$. Therefore,
\begin{equation*}
0\leq\rho(t_0)<\rho_-(\kappa_0,T)
<\rho_-^{\rm per}(t_0;T).
\end{equation*}
We only consider the nontrivial case $\rho(t_0)>0$,
since if $\rho(t_0)=0$, then $\rho(t)=0$
for all $t\geq t_0$.

Let $\rho_*(t):=\rho_-^{\rm per}(t;T)$.
By order preservation of the radial dynamics,
$0<\rho(t)<\rho_*(t)$ for all $t\geq t_0$.
Under condition~\eqref{pulse-bistability-condition},
$0<\rho_-^{\rm avg}<a$.
The periodic-solution expansion associated with
the transformation~\eqref{pulse-near-identity}
gives $\rho_*(t)=\rho_-^{\rm avg}+\mathcal O(T)$
uniformly over one period.
Thus, by periodicity, $\rho_*(t)<a$ for all
$t\geq t_0$ when $T$ is sufficiently small.


Define $\phi(t):=\log\bigl(\rho(t)/\rho_*(t)\bigr)$.
For all $t\geq t_0$,
\begin{equation*}
\dot\phi(t)
=2b\bigl(\rho(t)-\rho_*(t)\bigr)
     \bigl(2a-\rho(t)-\rho_*(t)\bigr)<0.
\end{equation*}
Hence,
$\rho(t+T)/\rho_*(t+T)<\rho(t)/\rho_*(t)$.
Since $\rho_*(t+T)=\rho_*(t)>0$, we obtain
$\rho(t+T)<\rho(t)$.

Let $\Pi_T$ denote the one-period map and define $\rho_n:=\rho(t_0+nT)$ for $n=0,1,\ldots$. Then
\begin{equation*}
0<\rho_{n+1}=\Pi_T(\rho_n)<\rho_n,
\end{equation*}
so $\{\rho_n\}$ converges to some $\rho_\infty\geq0$. By continuity of $\Pi_T$, taking limits on both sides of $\rho_{n+1}=\Pi_T(\rho_n)$ gives
\begin{equation*}
\rho_\infty=\Pi_T(\rho_\infty).
\end{equation*}
Suppose that $\rho_\infty>0$. Since $0<\rho_\infty\leq\rho_0<\rho_*(t_0)$, the one-period decrease established above yields $\Pi_T(\rho_\infty)<\rho_\infty$, contradicting this equality. Therefore, $\rho_\infty=0$.

Furthermore, since $h(\rho)\leq\sigma+a^2b$
and $K(t)\geq0$, we have
$\dot\rho/\rho\leq C:=2(\sigma+a^2b)$.
For $t\in[t_0+nT,t_0+(n+1)T]$, integration gives
$\rho(t)\leq e^{CT}\rho_n$.
Consequently, $\rho(t)\to0$, and hence
$(x(t),y(t))\to(0,0)$.

Finally, $\phi(t)\leq\phi(t_0)$ implies
$\rho(t)\leq\rho(t_0)\rho_*(t)/\rho_*(t_0)$.
By Lemma~\ref{lem:pulse-periodic-max} and periodicity,
$\rho_*(t)\leq\rho_*(t_0)$ for all $t\geq t_0$.
Therefore,
\begin{equation*}
\rho(t)\leq\rho(t_0)<\rho_-(\kappa_0,T),
\qquad t\geq t_0.
\end{equation*}
Choosing a common $T_0>0$ to satisfy all the
smallness conditions above, we conclude that,
for every $T\in(0,T_0]$, every solution initialized
in $\mathcal A_p$ at stimulation onset remains
in $\mathcal A_p$ for all $t\geq t_0$ and converges
asymptotically to the origin.\end{pfof}

\section{Pulsatile feedback with zero mean}
Motivated by electrical stimulation paradigms, where charge-balanced pulses are commonly used to  limit irreversible electrochemical reactions and brain tissue damage~\cite{merrill2005electrical}, we next consider zero-mean pulse-like feedback, with the feedback gain $K(t)$ described by
\begin{align}\label{pulse_zero_mean}
K(t)=
\begin{cases}
k_1, & ((t-t_0)\bmod T)\in[0,d_1T),\\
k_2, & ((t-t_0)\bmod T)\in[d_1T,(d_1+d_2)T),\\
0, & \text{otherwise},
\end{cases}
\end{align}
where $d_1,d_2>0$, $D:=d_1+d_2\leq1$, and
\begin{equation}\label{zero-mean-condition}
k_1d_1+k_2d_2=0.
\end{equation}

\begin{remark}
Condition~\eqref{zero-mean-condition} makes the gain $K(t)$ zero mean, but the state-dependent inputs $u_x(t)=-K(t)x,u_y(t)=-K(t)y$ may have non-zero mean over each cycle. However, when $T$ is small, $u_x(t), u_y(t)$ approaches zero mean for each cycle as $x$ and $y$ change much slower. We leave exact charge balance control for future studies.
\end{remark}

\begin{theorem}[Estimated basin under zero-mean feedback]\label{thm:zero-mean-convergence}
Assume $k_1>0$ and $k_2<0$. For any fixed $\kappa'>0$, there exists $T_1>0$ such that, for any $T\in(0,T_1]$, every solution to the system~\eqref{control} initialized at time $t_0$ in the set
\begin{equation*}
\mathcal A_z := \left\{x,y\in\mathbb R: x^2+y^2<\rho_-(\kappa',T)\right\}
\end{equation*}
remains in $\mathcal A_z$ for all $t\geq t_0$ and converges asymptotically to the origin. In particular, $\mathcal A_z\subseteq\mathcal A_K(t_0)$. Here,
\begin{equation}\label{basin_bound}
    \rho_-(\kappa',T) = \rho_{\rm unctr} + k_1d_1D\rho_{\rm unctr}T + (C_\rho-\kappa')T^2,
\end{equation}
with
\begin{equation*}
C_\rho = \Delta_{\rm avg} + \Gamma_0 + \frac12(k_1d_1D)^2\rho_{\rm unctr},
\end{equation*}
where
\begin{align*}
    &\Delta_{\rm avg} = \frac{D(4-3D)}{6\lambda}(a-\lambda)(a-2\lambda)(k_1d_1)^2,\\
    &\Gamma_0 = -\frac23 bk_1d_1D(4d_1+2d_2-3)\rho_{\rm unctr}^{\,2}(a-\rho_{\rm unctr}), \text{ and}\\
    &\lambda := \sqrt{a^2+\frac{\sigma}{b}}, \qquad \rho_{\rm unctr} := a-\lambda, \qquad D := d_1+d_2.
\end{align*}
\end{theorem}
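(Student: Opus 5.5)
The argument follows the proof of Theorem~\ref{thm:pulse-convergence}, but averaging must go to second order. Because $k_1d_1+k_2d_2=0$, first-order averaging of~\eqref{radial dynamics} returns the uncontrolled radial system. Its unstable equilibrium is $\rho_{\rm unctr}=a-\lambda$, and first order alone does not show that the boundary moves away from it. We therefore proceed in three steps. First, we compute the second-order averaged vector field $\bar g(\bar\rho)=2\bar\rho h(\bar\rho)+T\,g_2(\bar\rho)$. Second, we locate its unstable zero $\rho_-^{\rm avg,2}=\rho_{\rm unctr}+T\Delta'+\mathcal O(T^2)$. Third, we invoke \cite[Theorem A]{Novaes_2022} as in Lemma~\ref{lem:pulse-periodic-existence}. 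Since $\partial_\rho(2\rho h)=2\rho h'(\rho)=4b\rho\lambda\neq0$ at $\rho_{\rm unctr}$, the zero is simple, so that theorem gives an unstable $T$-periodic solution $\rho_-^{\rm per}(t;T)$ with $\rho_-^{\rm per}(t_0;T)\to\rho_{\rm unctr}$.

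Next, we expand $\rho_-^{\rm per}(t_0;T)$ to order $T^2$ using the second-order periodic near-identity transformation $\rho=\bar\rho+Tu_1(\bar\rho,s)+T^2u_2(\bar\rho,s)+\mathcal O(T^3)$ \cite[Sec.~2.9]{Sanders2007Averaging}. Here $\partial_su_1=-2\bar\rho V(s)$ with $\int_0^1u_1\,ds=0$. Solving for the piecewise-linear $u_1$ and evaluating at $s=0$ should give $u_1(\rho,0)=k_1d_1D\rho$, which is the $T$-coefficient in~\eqref{basin_bound}. The second-order averaged term is $g_2=\langle \partial_\rho f_1\,u_1-\partial_\rho u_1\, f_1^{\rm avg}\rangle$-type, with $f_1=2\rho(h-V)$. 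Its zero shift at $\rho_{\rm unctr}$, rescaled by $1/(4b\rho_{\rm unctr}\lambda)$, should produce $\Delta_{\rm avg}$; the factors $(a-\lambda)(a-2\lambda)/\lambda$ come from $\rho h'$ and $(\rho h)'$ evaluated at $a-\lambda$. The remaining $T^2$ contributions come from three sources: $u_2(\rho_{\rm unctr},0)$, which should give $\Gamma_0$ through $h'(\rho_{\rm unctr})=2b\lambda$; the cross term $\partial_\rho u_1\cdot\Delta'$; and the quadratic term $\tfrac12(k_1d_1D)^2\rho_{\rm unctr}$, which is the Taylor remainder of the exact flow $\rho e^{-2\int K}$ during the pulse. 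Collecting these gives
\begin{equation*}
\rho_-^{\rm per}(t_0;T)=\rho_{\rm unctr}+k_1d_1D\rho_{\rm unctr}T+C_\rho T^2+\mathcal O(T^3),
\end{equation*}
and hence $\rho_-^{\rm per}(t_0;T)-\rho_-(\kappa',T)=\kappa'T^2+\mathcal O(T^3)>0$ for small $T$.

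After that, we need the analogue of Lemma~\ref{lem:pulse-periodic-max}, namely $\max_t\rho_*(t)=\rho_*(t_0)$. Write $\rho_*=\rho_{\rm unctr}+\mathcal O(T)$, so that $h(\rho_*)=\mathcal O(T)$. During the $k_1>0$ phase the sign of $h(\rho_*)-K$ is $-k_1<0$, so $\rho_*$ decreases. During the $k_2<0$ phase it is $-k_2>0$, so $\rho_*$ increases. During the zero phase $h(\rho_*)=\mathcal O(T)$ and the sign is unclear, but the resulting variation is only $\mathcal O(T^2)$. I would therefore argue directly that $\rho_*(t)\le\rho_*(t_0)+\mathcal O(T^2)$, or else shift the comparison level. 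The cleanest route is to avoid this issue altogether. The proof of Theorem~\ref{thm:pulse-convergence} already gives $\rho(t)\le\rho(t_0)\rho_*(t)/\rho_*(t_0)$. For invariance it then suffices to show $\rho_*(t)\le\rho_*(t_0)$, which I expect to verify from $u_1(\rho,s)$ attaining its maximum at $s=0$: $u_1$ decreases on $[0,d_1)$, increases on $[d_1,D)$, and is constant afterwards.

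The rest repeats the proof of Theorem~\ref{thm:pulse-convergence} verbatim. Order preservation gives $0<\rho<\rho_*$. Since $\rho_*<a$ for small $T$, the function $\phi=\log(\rho/\rho_*)$ is strictly decreasing. The monotone one-period map argument then gives $\rho_n\to0$. Between sampling times we use the bound $\dot\rho/\rho\le 2(\sigma+a^2b)+2|k_2|$, where the extra $|k_2|$ term appears because $K$ is now negative on part of the period. Finally, we choose $T_1$ to satisfy all the smallness conditions. The main obstacle is the second-order computation of $C_\rho$: correctly obtaining $u_2(\cdot,0)$ and the second-order averaged drift, with consistent zero-mean normalizations, so that the three terms match $\Delta_{\rm avg}$, $\Gamma_0$ and the quadratic term.
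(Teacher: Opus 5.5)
\textbf{Main gap: the averaging order.} Your scheme stops one order too early. Under $k_1d_1+k_2d_2=0$, the second-order averaged drift you plan to compute vanishes identically. With $u_1=-2\bar\rho\,q(s)$, $q'=V$, $\int_0^1q\,ds=0$, you get
\begin{equation*}
\bigl\langle \partial_\rho f_1\,u_1-\partial_\rho u_1\cdot 2\bar\rho h(\bar\rho)\bigr\rangle
=-4\bar\rho^{2}h'(\bar\rho)\,\langle q\rangle+4\bar\rho\,\langle Vq\rangle=0,
\end{equation*}
since $\langle q\rangle=0$ and $Vq=\tfrac12(q^2)'$ with $q$ periodic.

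So $\Delta'=0$. Your truncation fixes the averaged equilibrium only up to $\mathcal O(T^2)$, and its $T^2$ coefficient needs the next averaged term. Your bookkeeping also shows the problem: a nonzero $T\Delta'$ would change the $T$-coefficient $k_1d_1D\rho_{\rm unctr}$, yet $\Delta_{\rm avg}$ sits at order $T^2$ in~\eqref{basin_bound}. Since $\kappa'>0$ is arbitrary, you need the $T^2$ coefficient of $\rho_-^{\rm per}(t_0;T)$ exactly. Without it, $\rho_*(t_0)-\rho_-(\kappa',T)=\kappa'T^2+\mathcal O(T^3)$ does not follow.

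The missing ingredient is the third-order averaged drift. The paper first applies the exact change $\rho=e^{-2Tq(s)}R$, which eliminates $K$ and gives
\begin{equation*}
\frac{dR}{ds}=T\,\overline G_1(R)-8bT^2q(s)R^2(a-R)+8bT^3q(s)^2R^2(a-2R)+\mathcal O(T^4).
\end{equation*}
It then removes the zero-mean $T^2$ term with $R=r-8bT^2r^2(a-r)Q(s)$. The $T^3$ term averages to $\overline G_3(r)=8b\langle q^2\rangle r^2(a-2r)$, with $\langle q^2\rangle=D(4-3D)(k_1d_1)^2/12$. The three pieces of $C_\rho$ then arise as follows:
\begin{itemize}
\item $\Delta_{\rm avg}$ is the order-$T^2$ shift $-\overline G_3(\rho_{\rm unctr})/\overline G_1'(\rho_{\rm unctr})$, with $\overline G_1'(\rho_{\rm unctr})=4b\lambda\rho_{\rm unctr}$.
\item $\Gamma_0=-8b\rho_{\rm unctr}^2(a-\rho_{\rm unctr})Q(0)$ is the near-identity correction.
\item $\tfrac12(k_1d_1D)^2\rho_{\rm unctr}$ comes from expanding $e^{-2Tq(0)}$ with $q(0)=-k_1d_1D/2$.
\end{itemize}
If you stay in the $\rho$ variable, you must carry averaging to third order.

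\textbf{Smaller gap: the $K=0$ phase in the analogue of Lemma~\ref{lem:periodic-max}.} Since $u_1(\bar\rho,s)$ is constant on $[D,1)$, the first-order expansion only gives $\rho_*(t)-\rho_*(t_0)=\mathcal O(T^2)$ there. A bound $\rho_*(t)\le\rho_*(t_0)+\mathcal O(T^2)$ does not give forward invariance of $\mathcal A_z$ for initial data near $\rho_-(\kappa',T)$. The ratio estimate $\rho(t)\le\rho(t_0)\rho_*(t)/\rho_*(t_0)$ needs $\rho_*(t)\le\rho_*(t_0)$ exactly.

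The paper closes this by using the first-order term only to place $\rho_*(t_0+DT)=\rho_{\rm unctr}+k_1d_1D\rho_{\rm unctr}T+\mathcal O(T^2)$ strictly inside $(\rho_{\rm unctr},a+\lambda)$, where $h>0$. On $(t_0+DT,t_0+T)$ the periodic solution follows the uncontrolled flow, which is strictly increasing on that interval. So $\rho_*$ rises monotonically to $\rho_*(t_0+T)=\rho_*(t_0)$.

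The remaining steps match the paper: order preservation, the decreasing $\log(\rho/\rho_*)$, the one-period map, and the growth bound with $-k_2$.
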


Since $k_1d_1D\rho_{\rm unctr}>0$, the positive first-order term in \eqref{basin_bound} dominates the second-order correction for sufficiently small $T$. Hence, there exists $T_1$, for every $T\in(0,T_1]$,
\begin{equation*}
\rho_-(\kappa',T) - \rho_{\rm unctr} = T\bigl[k_1d_1D\rho_{\rm unctr} + (C_\rho-\kappa')T\bigr] > 0.
\end{equation*}
Thus, the estimated basin $\mathcal A_z$ strictly contains the uncontrolled basin of attraction, i.e., $\CA^{\mathrm{unctr}} \subsetneq \mathcal A_z\subseteq\mathcal A_K(t_0)$, indicating the enlargement of the basin under control. Note that the enlargement is guaranteed when stimulation control is initiated at the perturbed state, as in responsive stimulation triggered upon detection~\cite{skarpaas2019brain}. 

\begin{remark}
    As in the nonzero-mean case, the proof of Theorem~\ref{thm:zero-mean-convergence} relies on approximating the unstable periodic solution that characterizes the basin boundary. However, under the zero-mean condition~\eqref{zero-mean-condition}, first-order averaging reproduces the uncontrolled dynamics and thus cannot be directly applied. We therefore use higher-order techniques in the averaging analysis and reconstruct the periodic solution.
\end{remark}

\subsection{Proof of Theorem~\ref{thm:zero-mean-convergence}}

To start with averaging, let $s=(t-t_0)/T$ and define $V(s):=K(t_0+Ts)$.
We introduce the exact periodic transformation
$\rho=e^{-2Tq(s)}R$, where
\begin{equation*}
q'(s)=V(s),
\qquad
\int_0^1q(s)\,ds=0.
\end{equation*}
The transformation yields $R$-dynamics
\begin{equation}\label{R-dynamics}
\frac{dR}{ds} = 2TR\left(
\sigma+2ab e^{-2Tq(s)}R-b e^{-4Tq(s)}R^2
\right).
\end{equation}

Applying the periodic near-identity transformation~\cite[Sec. 2.8]{Sanders2007Averaging}, we have
\begin{equation*}
R=r-8bT^2r^2(a-r)Q(s)+\mathcal O(T^3),
\end{equation*}
where
\begin{equation*}
Q'(s)=q(s),
\qquad
\int_0^1Q(s)\,ds=0.
\end{equation*}

The $R$-dynamics~\eqref{R-dynamics} can be reduced to the truncated third-order averaged system
\begin{equation}\label{higher-order-averaged}
\frac{dr}{ds}= T\overline G_1(r)+T^3\overline G_3(r),
\end{equation}
where
\begin{equation*}
\overline G_1(r)=2r(\sigma+2abr-br^2),
\end{equation*}
and
\begin{equation*}
\overline G_3(r)
=\frac{2}{3}br^2(a-2r)D(4-3D)(k_1d_1)^2.
\end{equation*}

Unlike the first-order averaged equation,
\eqref{higher-order-averaged} contains a correction
that depends on the pulse parameters through
$\overline G_3$. We next show that this averaged
system remains bistable for sufficiently small $T$ and derive its unstable
equilibrium.

\begin{lemma}[Bistability of averaged system]\label{lem:bistability-avg}
Suppose that $-a^2b<\sigma<0$,
with the pulse parameters $k_1,d_1,d_2$ fixed.
There exists $T_b>0$ such that, for every $T\in(0,T_b]$,
the averaged system~\eqref{higher-order-averaged}
is bistable,  and the basin of attraction of $r=0$ is $[0,r_-)$, where the averaged basin boundary $r_-$ satisfies
\begin{equation*}
    r_-=\rho_{\rm unctr}
+\Delta_{\rm avg}T^2+\mathcal O(T^3),
\end{equation*}
where $\lambda:=\sqrt{a^2+\sigma/b}$,
$\rho_{\rm unctr}:=a-\lambda$, and
\begin{equation*}
\Delta_{\rm avg}
=
\frac{D(4-3D)}{6\lambda}
(a-\lambda)(a-2\lambda)(k_1d_1)^2.
\end{equation*}
\end{lemma}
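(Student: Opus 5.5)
We write the right-hand side of~\eqref{higher-order-averaged} as $T\,F(r,\varepsilon)$ with $\varepsilon:=T^2$, where
\begin{equation*}
F(r,\varepsilon):=\overline G_1(r)+\varepsilon\,\overline G_3(r)=2r\,\Phi(r,\varepsilon),
\qquad
\Phi(r,\varepsilon):=h(r)+\tfrac{\varepsilon}{3}\,b\,r(a-2r)D(4-3D)(k_1d_1)^2 .
\end{equation*}
Thus $r=0$ is always an equilibrium. For $\varepsilon=0$, $\Phi(\cdot,0)=h$ has the two simple zeros $\rho_\pm=a\pm\lambda$, both positive because $-a^2b<\sigma<0$. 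Moreover $h<0$ on $(0,\rho_-)$, $h>0$ on $(\rho_-,\rho_+)$ and $h<0$ beyond $\rho_+$; here $\rho_-=\rho_{\rm unctr}$. Since $\partial_r h(\rho_\pm)=2b(a-\rho_\pm)=\pm 2b\lambda\neq0$, the implicit function theorem gives smooth branches $r_\pm(\varepsilon)$ of zeros of $\Phi$ for $|\varepsilon|$ small, with $r_\pm(0)=\rho_\pm$.

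To get the expansion we differentiate $\Phi(r_-(\varepsilon),\varepsilon)=0$ at $\varepsilon=0$:
\begin{equation*}
r_-'(0)=-\frac{\partial_\varepsilon\Phi(\rho_-,0)}{\partial_r h(\rho_-)}
=-\frac{\tfrac13 b\rho_-(a-2\rho_-)D(4-3D)(k_1d_1)^2}{2b\lambda}.
\end{equation*}
Substituting $\rho_-=a-\lambda$ gives $a-2\rho_-=2\lambda-a=-(a-2\lambda)$, so
\begin{equation*}
r_-'(0)=\frac{D(4-3D)}{6\lambda}(a-\lambda)(a-2\lambda)(k_1d_1)^2=\Delta_{\rm avg}.
\end{equation*}
Smoothness of the branch in $\varepsilon$ then yields $r_-=\rho_{\rm unctr}+\Delta_{\rm avg}T^2+\mathcal O(T^4)$, which is in particular $\mathcal O(T^3)$.

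The main obstacle is showing that for small $\varepsilon$ the function $\Phi(\cdot,\varepsilon)$ has \emph{no other} positive zeros, so that the sign pattern is preserved globally on $[0,\infty)$. The implicit function theorem only controls neighborhoods of $\rho_\pm$. We handle this as follows.
\begin{itemize}
\item[(i)] $\Phi(\cdot,\varepsilon)$ is a quadratic in $r$ with leading coefficient $-b-\tfrac{2\varepsilon}{3}bD(4-3D)(k_1d_1)^2<0$. Note $D(4-3D)>0$ because $D\le1$. Hence it has at most two real zeros, which must be $r_\pm(\varepsilon)$.
\item[(ii)] $\Phi(0,\varepsilon)=\sigma<0$ for every $\varepsilon$.
\end{itemize}
Consequently, for $\varepsilon\in(0,T_b^2]$ we have $\Phi<0$ on $(0,r_-)$, $\Phi>0$ on $(r_-,r_+)$ and $\Phi<0$ on $(r_+,\infty)$, with $0<r_-<r_+$. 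We choose $T_b$ small enough that the branches stay positive and distinct.

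The scalar-flow monotonicity argument of Lemma~\ref{lem:bistability-pulse} then applies verbatim. Solutions starting in $[0,r_-)$ decrease to $0$. Solutions starting above $r_-$ converge to $r_+$, which is stable since $\partial_r F(r_+)<0$. The equilibrium $r=0$ is stable since $\partial_r F(0)=2\sigma<0$. This establishes bistability, and the basin of attraction of $r=0$ is exactly $[0,r_-)$.
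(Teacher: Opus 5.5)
Your proof is correct and follows essentially the same route as the paper. Both factor the right-hand side as $2Tr$ times a concave quadratic that equals $\sigma<0$ at $r=0$, read off the sign pattern and bistability from the monotone scalar flow, and obtain $\Delta_{\rm avg}=-\overline G_3(\rho_{\rm unctr})/\overline G_1'(\rho_{\rm unctr})$.

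The only difference is how the two simple positive roots are obtained. You use the implicit function theorem, which also justifies the $T^2$ expansion rigorously and gives an $\mathcal O(T^4)$ remainder. The paper uses the discriminant and Vieta's formulas, followed by a formal substitution.

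There is one harmless slip: $\partial_r h(\rho_\pm)=2b(a-\rho_\pm)=\mp 2b\lambda$, not $\pm 2b\lambda$. You do use the correct value $2b\lambda$ at $\rho_-$ afterwards.
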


\begin{proof}
Let $\chi:=D(4-3D)(k_1d_1T)^2=\mathcal O(T^2)$.
System~\eqref{higher-order-averaged} becomes
$dr/ds=2TrP(r)$, where
\begin{equation*}
P(r)=\sigma+ab\left(2+\frac{\chi}{3}\right)r
-b\left(1+\frac{2\chi}{3}\right)r^2.
\end{equation*}
For sufficiently small $T$, its discriminant satisfies
$\Delta_P=4b(a^2b+\sigma)+\mathcal O(T^2)>0$.
The roots of $P(r) = 0$ have positive sum and product, so both are positive. Denote these roots by $r_-<r_+$. Since $P(0)<0$ and $P$ is concave, $P<0$ on $(0,r_-)\cup(r_+,\infty)$ and $P>0$ on $(r_-,r_+)$. As in the proof of Lemma~\ref{lem:bistability-pulse}, the system is bistable and the basin of attraction of $r=0$ is $[0,r_-)$.

Finally, substituting
$r_-=\rho_{\rm unctr}+cT^2+\mathcal O(T^3)$
into the equilibrium equation yields
\begin{equation*}
c=-\frac{\overline G_3(\rho_{\rm unctr})}
{\overline G_1'(\rho_{\rm unctr})}
=\Delta_{\rm avg},
\end{equation*}
which completes the proof.
\end{proof}

\begin{lemma}[Existence of periodic solution]\label{lem:zero-mean-periodic-existence}
There exists $T_e'>0$, such that for all $T \in (0,T_e']$, the $R$-dynamics~\eqref{R-dynamics} admit a $1$-periodic solution $R_-^{\rm per}(s;T)$ satisfying
\begin{equation*}
R_-^{\rm per}(0;T)\to\rho_{\rm unctr}\text{ when }T\to0.
\end{equation*}
\end{lemma}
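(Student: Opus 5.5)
We mirror the proof of Lemma~\ref{lem:pulse-periodic-existence}, but apply the existence result to the $R$-dynamics~\eqref{R-dynamics} rather than to~\eqref{radial dynamics}. This is natural because the first-order average of the original radial system under the zero-mean gain~\eqref{zero-mean-condition} is just the uncontrolled vector field. Write~\eqref{R-dynamics} in the standard periodic form
\begin{equation*}
\frac{dR}{ds}=T F_1(R,s)+T^2F_2(R,s)+\mathcal O(T^3),
\end{equation*}
obtained by expanding $e^{-2Tq(s)}$ and $e^{-4Tq(s)}$ in powers of $T$. Here $F_1(R,s)=2R(\sigma+2abR-bR^2)$ and $F_2$ is linear in $q(s)$. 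The right-hand side is $1$-periodic in $s$, smooth in $R$, and piecewise continuous in $s$, since $q$ is continuous and piecewise linear. The regularity hypotheses of \cite[Theorem A]{Novaes_2022} therefore hold: the Carathéodory-type periodic setting is already used in Lemma~\ref{lem:pulse-periodic-existence}.

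Next we compute the first averaged function. Since $F_1$ does not depend on $s$, we get $\overline F_1(R)=\int_0^1F_1(R,s)\,ds=\overline G_1(R)=2R\,h(R)$. Under bistability, $h(\rho_{\rm unctr})=0$ with $\rho_{\rm unctr}=a-\lambda>0$, so $\overline G_1(\rho_{\rm unctr})=0$. Moreover,
\begin{equation*}
\overline G_1'(\rho_{\rm unctr})=2\rho_{\rm unctr}h'(\rho_{\rm unctr})=4b\rho_{\rm unctr}\lambda\neq0,
\end{equation*}
because $\lambda>0$ when $\sigma>-a^2b$. Thus $\rho_{\rm unctr}$ is a simple zero of the non-identically-zero first averaged function. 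This is exactly the hypothesis used in Lemma~\ref{lem:pulse-periodic-existence}.

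We then invoke \cite[Theorem A]{Novaes_2022} with $\ell=1$. It gives $T_e'>0$ such that, for every $T\in(0,T_e']$, there is a $1$-periodic solution $R_-^{\rm per}(s;T)$ of~\eqref{R-dynamics} with $R_-^{\rm per}(0;T)\to\rho_{\rm unctr}$ as $T\to0$. The simple-zero condition also makes this solution isolated and hyperbolic, and unstable since $\overline G_1'>0$. Finally, we note that $\rho=e^{-2Tq(s)}R$ maps it to a $T$-periodic solution of~\eqref{radial dynamics}, because $q$ is $1$-periodic. This follows from $\int_0^1V=0$.

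The main obstacle is checking that the cited theorem applies to a vector field that is only piecewise continuous in $s$, rather than smooth. If that is problematic, we would fall back on a direct argument. Consider the period map $\Pi_T(R_0)=R(1;R_0,T)$. By the standard averaging estimate, $\Pi_T(R_0)-R_0=T\overline G_1(R_0)+\mathcal O(T^2)$ uniformly near $\rho_{\rm unctr}$. Since $\overline G_1$ changes sign at the simple zero $\rho_{\rm unctr}$, the intermediate value theorem applied to $\Pi_T(R_0)-R_0$ on $[\rho_{\rm unctr}-\epsilon,\rho_{\rm unctr}+\epsilon]$ yields a fixed point within $\epsilon$ for small $T$. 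Letting $\epsilon\to0$ gives the stated limit.
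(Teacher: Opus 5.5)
Your proposal is correct and follows the paper's proof. As there, you identify $\overline G_1$ as the first averaged function of~\eqref{R-dynamics}, verify that $\rho_{\rm unctr}$ is a simple zero with $\overline G_1'(\rho_{\rm unctr})=4b\lambda\rho_{\rm unctr}\neq0$, and invoke \cite[Theorem A]{Novaes_2022}. Your regularity concern is milder than you suggest: since $q$ is continuous, the right-hand side of~\eqref{R-dynamics} is actually continuous in $s$, and your period-map and intermediate-value fallback is a valid self-contained alternative.
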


\begin{proof}
Since $\overline G_1(\rho_{\rm unctr}) =0$, $\overline G_1'(\rho_{\rm unctr}) \neq 0$, and $\overline G_1(r)$ is not identically zero, by~\cite[Theorem A]{Novaes_2022}, there exists a $1$-periodic solution $R_-^{\rm per}(s;T)$ of $R$-dynamics~\eqref{R-dynamics}, such that the stated limit above holds.
\end{proof}
We next approximate the corresponding $T$-periodic
solution in the original radial coordinate $\rho$.
Evaluating the near-identity transformation at
$r=r_-$ and $s=0$ gives
\begin{equation*}
R_-^{\rm per}(0;T) = \rho_{\rm unctr} +(\Delta_{\rm avg}+\Gamma_0)T^2 +\mathcal O(T^3),
\end{equation*}
where
\begin{equation*}
\Gamma_0
= -\frac{2}{3}bk_1d_1D(4d_1+2d_2-3)
\rho_{\rm unctr}^2(a-\rho_{\rm unctr}).
\end{equation*}

Next, using the transformation $\rho=e^{-2Tq(s)}R$ with $q(0)=-k_1d_1D/2$, we have
\begin{equation*}
e^{-2Tq(0)}= 1+k_1d_1DT +\frac{1}{2}(k_1d_1D)^2T^2 +\mathcal O(T^3).
\end{equation*}
Since $s=0$ corresponds to $t=t_0$, multiplying the above by $R^{\rm per}_-(0;T)$, we obtain
\begin{equation}\label{zero-mean-periodic-expansion}
\rho_-^{\rm per}(t_0;T) = \rho_{\rm unctr}
+k_1d_1D\rho_{\rm unctr}T
+C_\rho T^2
+\mathcal O(T^3),
\end{equation}
where
\begin{equation*}
C_\rho = \Delta_{\rm avg}+\Gamma_0
+\frac{1}{2}(k_1d_1D)^2\rho_{\rm unctr}.
\end{equation*}
The following lemma shows that the periodic solution
attains its maximum at $t=t_0$.

\begin{lemma}[Upper bound of periodic solution]\label{lem:periodic-max}
Let $\rho_-^{\rm per}(t;T)$ be the unstable
$T$-periodic solution of system~\eqref{radial dynamics}
under the zero-mean gain~\eqref{pulse_zero_mean}.
For sufficiently small $T$,
\begin{equation*}
\max_{t\in[t_0,t_0+T]}\rho_-^{\rm per}(t;T)
=
\rho_-^{\rm per}(t_0;T).
\end{equation*}
\end{lemma}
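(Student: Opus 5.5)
The plan is to mirror the proof of Lemma~\ref{lem:pulse-periodic-max}: we determine the sign of $\dot\rho_*$ on each phase of the gain~\eqref{pulse_zero_mean}, where $\rho_*(t):=\rho_-^{\rm per}(t;T)$. The difference from the nonzero-mean case is that now $h(\rho_{\rm unctr})=0$. On the zero-gain phase $[t_0+DT,t_0+T)$, the sign of $\dot\rho_*=2\rho_*h(\rho_*)$ is therefore not fixed at order $\mathcal O(1)$. We must resolve it at order $T$.

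\textbf{Step 1 (uniform expansion).} We use the exact transformation $\rho=e^{-2Tq(s)}R$, together with the fact that along the periodic solution of Lemma~\ref{lem:zero-mean-periodic-existence} we have $R_-^{\rm per}(s;T)=\rho_{\rm unctr}+\mathcal O(T^2)$ uniformly in $s$. Here $R_-^{\rm per}(0;T)=\rho_{\rm unctr}+\mathcal O(T^2)$ was computed above, and~\eqref{R-dynamics} gives $dR/ds=\mathcal O(T^2)$ near $\rho_{\rm unctr}$, since $\sigma+2abR-bR^2$ vanishes at $\rho_{\rm unctr}$ and the remaining terms are $\mathcal O(T)$. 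This yields
\[
\rho_*(t)=\rho_{\rm unctr}\bigl(1-2Tq(s)\bigr)+\mathcal O(T^2),\qquad s=(t-t_0)/T,
\]
uniformly over one period. The mean-zero normalization of $q$ gives $q(0)=-k_1d_1D/2$. Since $q'=V$, $q$ increases with slope $k_1$ on $[0,d_1)$ and decreases with slope $k_2$ on $[d_1,D)$. By~\eqref{zero-mean-condition} it is then constant and equal to $q(D)=q(0)$ on $[D,1]$.

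\textbf{Step 2 (signs on the three phases).} On $(t_0,t_0+d_1T)$ we have $h(\rho_*)=\mathcal O(T)$ and $K=k_1>0$. Hence $\dot\rho_*=2\rho_*(h(\rho_*)-k_1)<0$ for small $T$. On $(t_0+d_1T,t_0+DT)$ we have $K=k_2<0$, so $\dot\rho_*=2\rho_*(h(\rho_*)-k_2)>0$. On the last phase (if $D<1$), $K=0$, and we Taylor expand $h$ at $\rho_{\rm unctr}$ using $h'(\rho_{\rm unctr})=2b\lambda>0$:
\[
h(\rho_*)=2b\lambda\,\rho_{\rm unctr}\bigl(-2Tq(0)\bigr)+\mathcal O(T^2)=2b\lambda k_1d_1D\rho_{\rm unctr}T+\mathcal O(T^2)>0 .
\]
So $\rho_*$ is increasing there as well. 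This order-$T$ sign determination on the zero-gain phase is the main obstacle. It hinges on having the $\mathcal O(T^2)$ control of $R_-^{\rm per}$ uniformly in $s$, not just at $s=0$. If needed, I would justify that control by combining Lemma~\ref{lem:zero-mean-periodic-existence} with the near-identity transformation $R=r-8bT^2r^2(a-r)Q(s)+\mathcal O(T^3)$ evaluated at $r=r_-$, whose correction is $\mathcal O(T^2)$ uniformly in $s$.

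\textbf{Step 3 (conclusion).} From Step 2, $\rho_*$ decreases on $[t_0,t_0+d_1T]$ and is nondecreasing on $[t_0+d_1T,t_0+T]$. Hence $\rho_*(t)\le\rho_*(t_0)$ on the first phase. On the remaining interval, $\rho_*(t)\le\rho_*(t_0+T)=\rho_*(t_0)$ by $T$-periodicity. Taking $T$ below the minimum of the finitely many smallness thresholds used above gives
\[
\max_{t\in[t_0,t_0+T]}\rho_*(t)=\rho_*(t_0).
\]
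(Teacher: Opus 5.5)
Your proposal is correct and follows essentially the same route as the paper: the same three-phase sign analysis, with the sign on the zero-gain phase resolved at order $T$ via $\rho=e^{-2Tq(s)}R$, $q\equiv -k_1d_1D/2$ on $[D,1]$, and $R_-^{\rm per}=\rho_{\rm unctr}+\mathcal O(T^2)$. The only difference is minor. The paper evaluates $\rho_*$ only at $t_0+DT$, places it in $(\rho_{\rm unctr},\rho_+)$, and then uses invariance of that interval (on which $h>0$) under the uncontrolled flow. You instead use a uniform-in-$s$ estimate of $R_-^{\rm per}$ and Taylor-expand $h$ at every point of the phase, which needs the Gronwall-type justification you sketched.
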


\begin{proof}
Let $\rho_*(t):=\rho_-^{\rm per}(t;T)$.
Since $\rho_*(t)=\rho_{\rm unctr}+\mathcal O(T)$
uniformly on $[t_0,t_0+T]$ and
$h(\rho_{\rm unctr})=0$, we have
$h(\rho_*(t))=\mathcal O(T)$.
Thus, for sufficiently small $T$,
\begin{equation*}
\begin{aligned}
\dot\rho_*(t)&<0,
&&t\in(t_0,t_0+d_1T),\\
\dot\rho_*(t)&>0,
&&t\in(t_0+d_1T,t_0+DT),
\end{aligned}
\end{equation*}
because $K(t)=k_1>0$ and $K(t)=k_2<0$,
respectively.

If $D<1$, then $K(t)=0$ on $(t_0+DT,t_0+T)$.
Using $q(D)=-k_1d_1D/2$ and
$R_-^{\rm per}(D;T)=\rho_{\rm unctr}+\mathcal O(T^2)$
gives
\begin{equation*}
\rho_*(t_0+DT)
=
\rho_{\rm unctr}
+k_1d_1D\rho_{\rm unctr}T
+\mathcal O(T^2).
\end{equation*}
Hence, for sufficiently small $T$,
$\rho_{\rm unctr}<\rho_*(t_0+DT)<\rho_+$,
where $\rho_+:=a+\sqrt{a^2+\sigma/b}$.
The uncontrolled radial dynamics preserve this interval,
on which $h(\rho)>0$. Therefore,
$\dot\rho_*(t)>0$ on $(t_0+DT,t_0+T)$.

Thus, $\rho_*(t)$ decreases initially and increases
thereafter. Since $\rho_*(t_0+T)=\rho_*(t_0)$, we have 
$\rho_*(t)\leq\rho_*(t_0)$ for $t\in[t_0,t_0+T]$.
\end{proof}

\begin{pfof}{Theorem~\ref{thm:zero-mean-convergence}}
Fix any $\kappa'>0$ and let $\rho_*(t):=\rho_-^{\rm per}(t;T)$. By~\eqref{zero-mean-periodic-expansion},
\begin{equation*}
\rho_*(t_0)-\rho_-(\kappa',T)=\kappa'T^2+\mathcal O(T^3)>0
\end{equation*}
for sufficiently small $T>0$.

Take any solution initialized at time $t_0$ in $\mathcal A_z$ and set $\rho_0=\rho(t_0):=x(t_0)^2+y(t_0)^2$. If $\rho_0=0$, the conclusion is immediate. Otherwise, $\rho_0<\rho_-(\kappa',T)<\rho_*(t_0)$.
Since $\rho_*(t)=\rho_{\rm unctr}+\mathcal O(T)$ uniformly over one period and $0<\rho_{\rm unctr}<a$, periodicity and scalar order preservation give
\begin{equation*}
0<\rho(t)<\rho_*(t)<a, \qquad t\geq t_0,
\end{equation*}
when $T$ is sufficiently small.

\begin{figure*}[t]
\centering
\includegraphics[width=\textwidth]{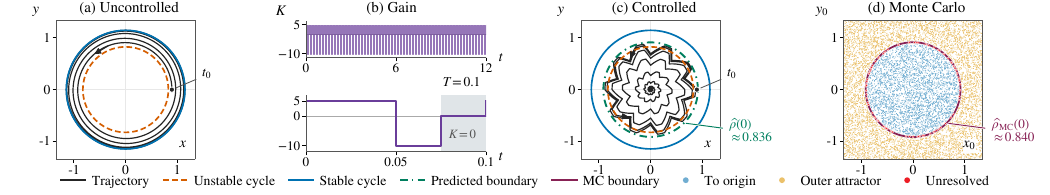}
\caption{Basin enlargement under zero-mean pulsatile feedback.
(a) Uncontrolled trajectory, starting from $\rho_0=0.8$, $\theta_0=0$.
(b) Periodic gain $K(t)$ with a one-period detail.
(c) Controlled trajectory, starting from
$\rho_0=0.8$, $\theta_0=0$.
The green dash-dotted circle in (c) marks the predicted basin boundary.
(d) Monte Carlo results for $10\,000$ initial conditions at $t_0=0$,
with the estimated boundary shown in purple-red.
Blue and orange points indicate convergence to the origin and
approach to the outer attractor, respectively.
Red points (unresolved) highlight samples near the basin boundary.}
\label{fig:zero-mean-basin-enlargement}
\end{figure*}

By the same logarithmic-ratio and one-period-map arguments as in the proof of Theorem~\ref{thm:pulse-convergence}, we obtain \begin{equation*}
\frac{\rho(t)}{\rho_*(t)}\leq\frac{\rho(t_0)}{\rho_*(t_0)}, \qquad t\geq t_0.
\end{equation*}
and $\rho(t_0+nT)\to0$. 
Here, since $h(\rho)\leq\sigma+a^2b$ and $K(t)\geq k_2$, the growth bound becomes $\dot\rho/\rho\leq C:=2(\sigma+a^2b-k_2)>0$ almost everywhere. Thus, for $t\in[t_0+nT,t_0+(n+1)T]$,
\begin{equation*}
\rho(t)\leq e^{CT}\rho(t_0+nT),
\end{equation*}
which yields $\rho(t)\to0$ and hence $(x(t),y(t))\to(0,0)$.

Finally, Lemma~\ref{lem:periodic-max} and periodicity give $\rho_*(t)\leq\rho_*(t_0)$ for all $t\geq t_0$. Combining this with the ratio estimate above yields
\begin{equation*}
\rho(t)\leq\rho(t_0)\frac{\rho_*(t)}{\rho_*(t_0)}\leq\rho(t_0)<\rho_-(\kappa',T), \qquad t\geq t_0.
\end{equation*}
Thus, every solution initialized in $\mathcal A_z$ at stimulation onset remains in $\mathcal A_z$. Choosing a common $T_1>0$ to satisfy all the smallness conditions above completes the proof.
\end{pfof}

\section{Control Design}

This section considers design of pulse parameters for maximizing the basin of attraction.

For the nonzero-mean pulse,
\begin{equation*}
\rho_-^{\rm avg}
=
a-\sqrt{a^2+\frac{\sigma-kd}{b}}
\end{equation*}
increases monotonically with $kd$. Hence, the basin is enlarged by increasing
$k$ or $d$, subject to the bistability condition $kd<a^2b+\sigma$.

For the zero-mean pulse, we have
\begin{equation*}
\rho_-^{\rm per}(t_0;T)-\rho_{\rm unctr}
=
k_1d_1D\,\rho_{\rm unctr}T+\mathcal O(T^2),
\quad D=d_1+d_2,
\end{equation*}
so, for sufficiently small $T$, maximizing the basin reduces to maximizing
$k_1d_1D$.

\begin{theorem}[Optimal zero-mean pulse design]
Let $K_{\max}>0$ denote the maximum pulse amplitude
and $d^+\in(0,1]$ the upper bound on the total duty cycle.
Assume the zero-mean feedback gain~\eqref{pulse_zero_mean},
with $k_1>0$ and $k_2<0$, satisfies
\begin{equation*}
d_1+d_2\leq d^+,\qquad |k_1|,|k_2|\leq K_{\max}.
\end{equation*}
Then, for a fixed $T$, the estimated basin enlargement reaches its maximum if
\begin{equation*}
k_1^*=K_{\max},\qquad
k_2^*=-K_{\max},\qquad
d_1^*=d_2^*=\frac{d^+}{2}.
\end{equation*}
\end{theorem}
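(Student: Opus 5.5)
The plan is to reduce the statement to a finite-dimensional constrained maximization, following the discussion just before the theorem. For fixed small $T$, the leading-order enlargement is $\rho_-^{\rm per}(t_0;T)-\rho_{\rm unctr}=k_1d_1D\,\rho_{\rm unctr}T+\mathcal O(T^2)$. Since $\rho_{\rm unctr}>0$ and $T>0$ are fixed, the estimated enlargement (to leading order) is maximized exactly when $J:=k_1d_1(d_1+d_2)$ is maximized over the admissible set. That set is
\[
\{\,k_1>0,\ k_2<0,\ d_1,d_2>0,\ d_1+d_2\le d^+,\ |k_1|,|k_2|\le K_{\max},\ k_1d_1+k_2d_2=0\,\}.
\]
I will interpret the theorem in this leading-order sense, consistent with the preceding sentence "maximizing the basin reduces to maximizing $k_1d_1D$."

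The first step is to eliminate the zero-mean constraint. It gives $k_1d_1=|k_2|d_2$, so $k_1d_1\le K_{\max}d_2$ and also $k_1d_1\le K_{\max}d_1$. Hence $k_1d_1\le K_{\max}\min(d_1,d_2)$, and so
\[
J\le K_{\max}\,\min(d_1,d_2)\,(d_1+d_2).
\]
The second step is to bound the right-hand side. Since $\min(d_1,d_2)\le (d_1+d_2)/2$, we get $J\le K_{\max}D^2/2\le K_{\max}(d^+)^2/2$, using $D\le d^+$.

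The third step is to verify attainment. With $k_1^*=K_{\max}$, $k_2^*=-K_{\max}$ and $d_1^*=d_2^*=d^+/2$, all constraints hold: the zero-mean condition reads $K_{\max}d^+/2-K_{\max}d^+/2=0$. For these parameters $J=K_{\max}(d^+/2)\,d^+=K_{\max}(d^+)^2/2$, which equals the upper bound. Optionally, I will note from the equality cases that this maximizer is unique. Equality requires $D=d^+$, $d_1=d_2$, and $k_1d_1=K_{\max}d_1=K_{\max}d_2$, which forces $k_1=|k_2|=K_{\max}$.

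The main obstacle is interpretive rather than computational. The theorem says "for a fixed $T$," but optimality is only established for the $\mathcal O(T)$ coefficient. The $T^2$ term $C_\rho$ also depends on $(k_1,d_1,d_2)$ and could in principle shift the exact optimizer. I would handle this by stating that the estimated basin boundary $\rho_-(\kappa',T)$ is maximized in the sense of its leading-order term, for $T$ sufficiently small. A stronger claim is available if the maximizer of $J$ is strict and the admissible set is compact, which holds after closing $d_1,d_2\ge0$. In that case the exact maximizer of the full expansion converges to the stated parameters as $T\to0$. The $\mathcal O(T^2)$ perturbation is uniformly bounded on the compact parameter set, so it cannot overturn a gap of order $T$ except in an $o(1)$ neighbourhood of the maximizer.
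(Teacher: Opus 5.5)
Your proposal is correct and follows essentially the same route as the paper: you reduce to maximizing $J=k_1d_1D$, use the zero-mean constraint to get $J\le K_{\max}D\min\{d_1,d_2\}\le K_{\max}D^2/2\le K_{\max}(d^+)^2/2$, and read off attainment and uniqueness from the equality cases. Your caveat is well taken: optimality is established only for the $\mathcal O(T)$ coefficient, since $C_\rho$ also depends on the pulse parameters, and the paper's proof tacitly adopts the same leading-order interpretation despite the phrase ``for a fixed $T$''.
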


\begin{proof}
Let $J:=k_1d_1D$. Maximizing the leading-order basin
enlargement is equivalent to maximizing $J$.
The amplitude bounds and zero-mean condition imply
\begin{equation*}
J=k_1d_1D\leq K_{\max}d_1D,
\qquad
J=(-k_2)d_2D\leq K_{\max}d_2D.
\end{equation*}
Thus,
\begin{equation*}
J\leq K_{\max}D\min\{d_1,d_2\}
\leq \frac{K_{\max}D^2}{2}
\leq \frac{K_{\max}(d^+)^2}{2}.
\end{equation*}
Equality in $J\leq K_{\max}D^2/2$ holds if and only if
$d_1=d_2$ and $k_1=-k_2=K_{\max}$.
Equality in $K_{\max}D^2/2\leq K_{\max}(d^+)^2/2$
holds if and only if $D=d^+$.
Hence, the maximum is achieved when
\begin{equation*}
k_1=K_{\max},\qquad
k_2=-K_{\max},\qquad
d_1=d_2=\frac{d^+}{2},
\end{equation*}
which completes the proof.
\end{proof}

\section{Numerical studies}

We examine basin enlargement under zero-mean pulsatile feedback using
$a=b=1$, $\sigma=-0.9$, and $\omega=2\pi$. The uncontrolled system is
bistable, with uncontrolled basin of attraction of the healthy equilibrium being (see Fig.~\ref{fig:zero-mean-basin-enlargement}(a)) \begin{equation*}
	\CA^{\rm unctr} \coloneq\left\{x,y\in\R\colon x^2 +y^2 < 1 - {\sqrt{10}}/{10}\right\}.
\end{equation*}
We set $T=0.1$ and consider the feedback gain~\eqref{pulse_zero_mean} with $k_1=5,k_2=-10, d_1=0.5$, and $d_2=0.25$,
which satisfies the zero-mean condition (see Fig.~\ref{fig:zero-mean-basin-enlargement}(b)).
Then, it can be calculated from Theorem~\ref{thm:zero-mean-convergence} that the estimated basin becomes 
\begin{equation*}
\mathcal A_z := \left\{x,y\in\mathbb R: x^2+y^2<0.83\right\}.
\end{equation*}
It can be observed from  Fig.~\ref{fig:zero-mean-basin-enlargement}(c) that the controlled basin enlarges the uncontrolled one. Figs.~\ref{fig:zero-mean-basin-enlargement}(a) and~\ref{fig:zero-mean-basin-enlargement}(c) show the uncontrolled and controlled trajectories on the phase plane, respectively, both starting from $\rho_0=0.8$ and $\theta_0=0$. Without control, the trajectory approaches the limit cycle, while under control, it approaches the healthy equilibrium.
Fig.~\ref{fig:zero-mean-basin-enlargement}(d) shows Monte Carlo
results for $10\;000$ initial conditions at $t_0=0$.
Blue and orange points correspond to trajectories approaching
the origin and the outer attractor, respectively, while red points
are concentrated near the basin boundary. It demonstrated that the theoretically-derived basin is very close to the empirical tested basin when $T$ is small.

\section{Conclusion}

This paper shows that both pulsatile feedback control can enlarge the basin of attraction of the stable equilibrium in a bistable neural population model, improving resilience to perturbations that may trigger seizure-like oscillations. We employ high-order averaging techniques for the analysis. Analytical estimate of the basin boundaries enable pulse-parameter selection that maximizes enlargement under amplitude constraints, while numerical simulations validate the theoretical results. These findings could provide a theoretical basis for designing pulsatile stimulation to strengthen the resilience of normal neural activity.
Future work could extend the analysis to strictly zero-mean control inputs and coupled neural populations. Other directions include studying robustness to noise and parameter uncertainty. Experimental validation is also needed to assess the physiological relevance of the theoretical findings.

\bibliographystyle{IEEEtran}
\bibliography{refers}

\end{document}